\newtheorem{thm}{\textbf{\text{Theorem}}}
\newtheorem{lem}{\textbf{\text{Lemma}}}
\newtheorem{pro}{\textbf{\text{Proposition}}}
\newtheorem{rmk}{\textbf{\text{Remark}}}
\newcommand{\ie}{\textit{i.e.}}
\newcommand{\eg}{\textit{e.g.}}
\title{\LARGE \bf Nonlinear Observer Design for Landmark-Inertial Simultaneous Localization and Mapping}
\author{Mouaad Boughellaba, Soulaimane Berkane, and Abdelhamid Tayebi% <-this % stops a space
\thanks{This work was supported by the National Sciences and Engineering Research Council of Canada (NSERC), under the grants NSERC-DG RGPIN 2020-06270 and NSERC-DG RGPIN-2020-04759, and by Fonds de recherche du Qu\'ebec (FRQ).} 
\thanks{M. Boughellaba and A. Tayebi are with the Department of Electrical Engineering, Lakehead University, Thunder Bay, ON P7B 5E1, Canada \tt\small \{mboughel,atayebi\}@lakeheadu.ca.} 
\thanks{S. Berkane is with the Department of Computer Science and Engineering, University of Quebec in Outaouais, Gatineau, QC, Canada. {\tt\small Soulaimane.Berkane@uqo.ca}}
}
\begin{document}

\maketitle
\thispagestyle{empty}
\pagestyle{empty}

%%%%%%%%%%%%%%%%%%%%%%%%%%%%%%%%%%%%%%%%%%%%%%%%%%%%%%%%%%%%%%%%%%%%%%%%%%%%%%%%
\begin{abstract}
This paper addresses the problem of Simultaneous Localization and Mapping (SLAM) for rigid body systems in three-dimensional space. 
We introduce a new matrix Lie group \(SE_{3+n}(3)\), whose elements are composed of the pose, gravity, linear velocity and landmark positions, and propose an almost globally asymptotically stable nonlinear geometric observer that integrates Inertial Measurement Unit (IMU) data with landmark measurements. The proposed observer estimates the pose and map up to a constant position and a constant rotation about the gravity direction.
Numerical simulations are provided to validate the performance and effectiveness of the proposed observer, demonstrating its potential for robust SLAM applications.
\end{abstract}

\section{Introduction}
The design of reliable autonomous robots (\eg, aerial vehicles, ground vehicles and marine vehicles) relies mainly on the development of efficient and reliable algorithms for self-localization, environment perception and motion control. In this context, a fundamental problem of great importance to the control and robotics community, namely the Simultaneous Localization And Mapping (SLAM), consists in estimating the robot's trajectory (\ie, position, linear velocity, and orientation) while simultaneously building a map of the unknown surrounding environment. Similar to vision-aided inertial navigation systems \cite{Wang_TAC2021}, visual-based SLAM can be complemented with inertial measurements, such as those provided by Inertial Measurement Units (IMUs), to improve the accuracy and robustness of SLAM algorithms. The use of visual sensors together with IMUs to solve SLAM problems is referred to as visual-IMU SLAM or visual-inertial SLAM \cite{Stefan_IJR2015,Campos_TR2021}.\\
Tracing back the literature on SLAM, one finds that the solutions to SLAM fall mainly into two categories: filter-based solutions and optimization-based solutions \cite{Strasdat_ICRA2010}. Filter-based solutions can be further classified into stochastic-based solutions and deterministic-based solutions. The vast majority of the proposed stochastic solutions are variants of the Kalman filter. The most popular variant is the Extended Kalman Filter (EKF) SLAM, which models the uncertainty in the system dynamics and the measurement model as additive Gaussian noise and also considers the position of the robot and landmarks as a single state vector \cite{SLAM_tutorial_1}. However, the dimension of the state vector defined in EKF SLAM grows continuously as new landmark observations are made. Consequently, EKF cannot effectively handle a large-scale SLAM due to its high computational complexity \cite{Alsayed_2017}. In addition, EKF SLAM suffers from a consistency problem that is well known in the literature \cite{Consistency_pb2,Consistency_pb1}. Therefore, a significant number of EKF-based SLAM solutions have been proposed to improve the scalability, robustness, and consistency of the standard EKF SLAM, such as compressed EKF SLAM \cite{compressed_EKF_SLAM}, unscented Kalman filter SLAM \cite{uncentered_Kalman_filte}, and information filter SLAM \cite{Information_Filter_SLAM}. Unfortunately, all EKF-based SLAMs guarantee only local stability due to the linearization in their framework. This limitation motivated the authors in \cite{Louren_2018,Louren_2016,Louren_2013,Guerreiro_TR2013} to propose sensor-based SLAM observers designed in sensor space to derive a linear time-varying system that mimics the nonlinear dynamics of the SLAM problem. As a result, global stability can be achieved by applying the Kalman filter approach to the derived linear time-varying system. However, this class of sensor-based observers does not include the robot pose in the dynamics of the SLAM problem.\\ 
Recently, many researchers have adopted nonlinear deterministic observers, designed directly on matrix Lie groups to achieve strong stability results for the SLAM problem. The work in \cite{invariant_EKF_SLAM} uses a symmetry-preserving approach \cite{Symmetry-preserving_observers}, along with invariant observer design techniques \cite{Bonnabel_2005} to propose a new variant of EKF SLAM, called invariant EKF SLAM (IEKF SLAM). Later in \cite{barrau_arxiv2016}, the authors introduced a new Lie group denoted by $SE_{1+n}(3)$ and proposed another IEKF SLAM on this Lie group. Although the IEKF SLAM outperforms the standard EKF SLAM in terms of consistency and robustness, it still relies on linearization to find the observer gain (\textit{i.e.,} no global stability results can be obtained). The group structure $SE_{k+1}(3)$ has been exploited also in \cite{Mahony_SLAM_CDC2017,Mahony_AJC2021} where geometric nonlinear SLAM observers are proposed. Other works, such as \cite{Miaomiao_SLAM_CDC_2018,Zlotnik_SLAM_TAC2018} have considered biased group velocity measurements. More recently, the authors in \cite{VANGOOR_aut_2021,Goor_SLAM_CDC2019} extended the geometric structure of the SLAM algorithm proposed in \cite{Mahony_SLAM_CDC2017} to include bearing vectors. The extended matrix Lie group structure $VSLAM_n(3)$ was used to design an almost semi-globally asymptotically stable nonlinear SLAM observer \cite{VANGOOR_aut_2021}.

Most of the aforementioned references assume the availability of the body-frame linear velocity of the robot. Unfortunately, this assumption is often not feasible in many low-cost applications, as measuring the body-frame linear velocity typically requires a sophisticated setup that can be prohibitively expensive. In the present paper, we propose a landmark-inertial SLAM observer that guarantees almost global asymptotic stability (AGAS) without the need for body-frame linear velocity measurements. Instead, our approach utilizes data from an IMU. The proposed observer estimates the robot's pose and the map up to an unknown constant position and rotation about the vertical axis (yaw), which represents the best result achievable given the inherent observability limitations of the SLAM problem (four unobservable directions, see \cite{martinelli2013observability}). Additionally, the observer is computationally efficient; its complexity increases linearly with the number of landmarks, making it more suitable for real-time applications compared to traditional Extended Kalman Filter (EKF)-based techniques, where complexity increases quadratically due to covariance updates.

%Therefore, the authors in \cite{Pieter_TR2023} proposed a visual-inertial SLAM group to design a nonlinear geometric observer to address the odometry problem.

\section{Preliminaries}\label{s2}
The sets of real numbers and the n-dimensional Euclidean space are denoted by $\mathbb{R}$ and $\mathbb{R}^n$, respectively. The set of unit vectors in $\mathbb{R}^n$ is defined as $\mathbb{S}^{n-1}:=\{x\in \mathbb{R}^n~|~x^\top x =1\}$. Given two matrices $A$,$B$ $\in \mathbb{R}^{m\times n}$, their Euclidean inner product is defined as $\langle \langle A,B \rangle \rangle=\text{tr}(A^\top B)$. The Euclidean norm of a vector $x \in \mathbb{R}^n$ is defined as $||x||=\sqrt{x^\top x}$, and the Frobenius norm of a matrix $A \in \mathbb{R}^{n\times n}$ is given by $||A||_F=\sqrt{\langle \langle A,A \rangle \rangle}$. The identity matrix is denoted by $I_n \in \mathbb{R}^{n \times n}$. The attitude of a rigid body is represented by a rotation matrix $R$ which belongs to the special orthogonal group $SO(3):= \{ R\in \mathbb{R}^{3\times 3} | \hspace{0.1cm}\text{det}(R)=1, R^\top R=I_3\}$. The tangent space of the compact manifold $SO(3)$ is given by $T_RSO(3):=\{R \hspace{0.1cm}\Omega \hspace{0.2cm} | \hspace{0.2cm} \Omega \in \mathfrak{so}(3)\}$, where $\mathfrak{so}(3):=\{ \Omega \in \mathbb{R}^{3\times 3} | \Omega^\top=-\Omega\}$ is the Lie algebra of the matrix Lie group $SO(3)$. The map $[.]_{\times}: \mathbb{R}^3 \rightarrow \mathfrak{so}(3)$ is defined such that $[x]_\times y=x \times y$, for any $x,y \in \mathbb{R}^3$, where $\times$ denotes the vector cross product on $\mathbb{R}^3$. The angle-axis parameterization of $SO(3)$, is given by $\mathcal{R}(\theta, v):=I_3+\sin\hspace{0.05cm}\theta \hspace{0.2cm}[v]^\times + (1-\cos\hspace{0.05cm}\theta)([v]^\times)^2$, where $v\in \mathbb{S}^2$ and  $\theta \in \mathbb{R}$ are the rotation axis and angle, respectively. The Kronecker product of two matrices $A$ and $B$ is denoted by $A \otimes B$. The map $\mathrm{vec}: \mathbb{R}^{n \times m}\rightarrow \mathbb{R}^{nm}$ is defined as $\mathrm{vec}([x_1~x_2~\hdots~x_m])=[x_1^\top, x_2^\top, \hdots, x_m^\top]^\top$ where $x_i \in \mathbb{R}^n$ for each $i \in \{1, 2, \hdots, m\}$. Inspired by \cite{barrau_arxiv2016}, we define the matrix Lie group, denoted by $SE_{3+n}(3) \in \mathbb{R}^{(6+n)\times(6+n)}$, as $SE_{3+n}(3):= \{ X=\mathcal{M}(R,x_1,x_2,x_3,\bold{x_L}):R\in SO(3), x_1, x_2, x_3\in \mathbb{R}^3, \bold{x_L} \in \mathbb{R}^{3 \times n}\}$, where the map $\mathcal{M}: SO(3)\times \mathbb{R}^3\times \mathbb{R}^3\times \mathbb{R}^3\times \mathbb{R}^{3\times n} \rightarrow \mathbb{R}^{(6+n)\times(6+n)}$ is defined as follows:

\vspace{-0.3cm}
{\small \begin{equation*} \mathcal{M}(R,x_1,x_2,x_3,\bold{x_L}):=\left[\begin{array}{r|r} \begin{matrix} R \qquad \end{matrix} & \begin{matrix} x_1&x_2&x_3&\bold{x_L}\end{matrix}\\ \hline 0_{(3+n)\times 3}& I_{3+n}\qquad \end{array}\right]. \end{equation*}}The inverse of $X$ is given by $X^{-1}=\mathcal{M}(R^\top, -R^\top x_1, -R^\top x_2, -R^\top x_3, -R^\top \bold{x_L}) \in SE_{3+n}(n)$. Moreover, for any $X_1, X_2 \in SE_{3+n}(3)$, one has $X_1 X_2 \in SE_{3+n}(3)$ and $X_1^{-1}X_1=X_1 X_1^{-1}=I_{6+n}$. The Lie algebra of  the matrix Lie group $SE_{3+n}(3)$, denoted by $\mathfrak{se}_{3+n}(3) \in \mathbb{R}^{(6+n)\times(6+n)}$, is given as follows:

\vspace{-0.3cm}
{\small\begin{align}
    \mathfrak{se}_{3+n}(3):= \{ &V=\mathcal{V}(\Omega,\xi_1,\xi_2,\xi_3,\boldsymbol{\xi_L}):\nonumber\\
    &\Omega\in \mathfrak{so}(3), \xi_1, \xi_2, \xi_3\in \mathbb{R}^3, \boldsymbol{\xi_L} \in \mathbb{R}^{3 \times n}\},\nonumber
\end{align}}where the map $\mathcal{V}: \mathfrak{so}(3)\times \mathbb{R}^3\times \mathbb{R}^3\times \mathbb{R}^3\times \mathbb{R}^{3\times n} \rightarrow \mathbb{R}^{(6+n)\times(6+n)}$ is defined as follows:

\vspace{-0.3cm}
{\small
\begin{equation*} \mathcal{V}(\Omega,\xi_1,\xi_2,\xi_3,\boldsymbol{\xi_L}):=\left[\begin{array}{r|r} \begin{matrix} \Omega \qquad \end{matrix} & \begin{matrix} \xi_1&\xi_2&\xi_3&\boldsymbol{\xi_L}\end{matrix}\\ \hline 0_{(3+n)\times 3}& 0_{3+n}\qquad \end{array}\right]. \end{equation*}}

\section{Problem Statement}\label{s3}
Let $\{\mathcal{I}\}$ and $\{\mathcal{B}\}$ be the inertial frame and the body-fixed frame attached to the center of mass of a rigid body, respectively. Consider the following dynamics of a rigid body and a set of $n$ static landmarks:

\vspace{-0.3cm}
{\small
\begin{align}
    \dot{R}&=R[\omega^\mathcal{B}]_\times \label{equ:dynamics1}\\
    \dot{p}&=v\label{equ:dynamics12}\\
    \dot{v}&=g+
    Ra^\mathcal{B}\label{equ:dynamics11}\\
    %\dot{g}&=0\\
    \dot{p}_i&=0,\label{equ:dynamics4}
\end{align}}where $R \in SO(3)$ is the orientation of frame $\{\mathcal{B}\}$ with respect to frame $\{\mathcal{I}\}$, $p \in \mathbb{R}^3$ and $v \in \mathbb{R}^3$ denote the position and the linear velocity of the rigid body expressed in the inertial frame $\{\mathcal{I}\}$, $p_i \in \mathbb{R}^3$ is the position of the $i$-th landmark expressed in $\{\mathcal{I}\}$, $g\in \mathbb{R}^3$ is the acceleration due to gravity expressed in $\{\mathcal{I}\}$, $a^\mathcal{B} \in \mathbb{R}^3$ is the apparent acceleration capturing all non-gravitational forces applied to the rigid body expressed in $\{\mathcal{B}\}$, $\omega^\mathcal{B}$ is the angular velocity of the rigid body expressed in $\{\mathcal{B}\}$. The system dynamics \eqref{equ:dynamics1}-\eqref{equ:dynamics4} can be captured by a state evolving on the Lie group $SE_{2+n}(3)$ since the gravity vector $g$ is constant and known. However, it is difficult to design an observer with global stability guarantees directly on this group due to the coupling resulting from the gravity direction when considering the group error. To remove this coupling, we extend the system with the additional (auxiliary) state $g$ which satisfies:

\vspace{-0.2cm}
{\small
\begin{equation}
    \dot g=0\label{equ:dynamics2}.
\end{equation}}The idea of introducing auxiliary states into the system to remove coupling in the error equation has been used for example in \cite{Wang_TAC2021} in the context of visual-inertial navigation with a known map. The resulting dynamics of \eqref{equ:dynamics1}-\eqref{equ:dynamics4} and \eqref{equ:dynamics2} can be captured by considering the extended state $X=\mathcal{M}(R,p,v,g,\bold{p_L})$, with $\bold{p_L}:=[p_1~p_2~\hdots~p_n]$, which evolves in the Lie group $SE_{3+n}(3)$. Consequently, in view of \eqref{equ:dynamics1}-\eqref{equ:dynamics4}, one has

\vspace{-0.3cm}
{\small
\begin{equation}\label{real_sys_on_group}
    \dot{X}=[X, H]+XV,
\end{equation}}with group velocity $V=\mathcal{V}\left([\omega^\mathcal{B}]_\times, 0_{3\times 1}, a^\mathcal{B}, 0_{3\times 1}, 0_{3\times n}\right)$ (obtained from IMU), 
where $[.,.]:\mathbb{R}^{(6+n)\times (6+n)} \times \mathbb{R}^{(6+n)\times (6+n)} \rightarrow \mathbb{R}^{(6+n)\times (6+n)}$ is the Lie bracket operator defined as $[X_1, X_2]=X_1 X_2-X_2 X_1$ for any $X_1, X_2 \in \mathbb{R}^{(6+n)\times (6+n)}$ and
\begin{equation*} H=\left[\begin{array}{r|r} \begin{matrix} 0_3 \qquad \end{matrix} & \begin{matrix} 0_{3\times (3+n)}\end{matrix}\\ \hline 0_{(3+n)\times 3}& \qquad S \qquad \end{array}\right] \end{equation*}
with ~~~{\small$S=\begin{bmatrix}
        0&0&0&\cdots&0&0\\
        1&0&0&\cdots&0&0\\
        0&1&0&\cdots&0&0\\
        0&0&0&\cdots&0&0\\
        \vdots&\vdots&\vdots&\ddots&\vdots&\vdots\\
          0&0&0&\cdots&0&0\\
    \end{bmatrix} \in \mathbb{R}^{(3+n)\times (3+n)}$}.\\
    We assume that the relative position between the landmarks and the rigid body is measured in the body-fixed frame $\{\mathcal{B}\}$ and given as follows: 
    %(See Fig. \ref{SLAM_diagram}):

    \vspace{-0.3cm}
    {\small
\begin{equation}\label{equ:measurements}
    y_i=R^\top\left(p-p_i\right),
\end{equation}}where $i\in\{1, 2, \hdots, n\}$. 
Defining $\bold r_i := [0_{3\times 1}^\top~r_i^\top]^\top \in \mathbb{R}^{6+n}$ and $\bold y_i := [y_i^\top~r_i^\top]^\top \in \mathbb{R}^{6+n}$ where $r_i:=[-1~0~0~e_i^\top]^\top \in \mathbb{R}^{3+n}$ with $e_i$ denotes the $i$th basis vector of $\mathbb{R}^n$, one can rewrite the measurements \eqref{equ:measurements} as follows:
{\small
\begin{equation}\label{equ:measurements_on_group}
    \bold y_i=X^{-1} \bold r_i,
\end{equation}}for every $i \in \{1, 2, \hdots, n\}$. Given the dynamics \eqref{real_sys_on_group} and the measurements \eqref{equ:measurements_on_group}, our objective is to estimate the pose and the landmark positions simultaneously. This is actually a SLAM problem, which is the process of estimating the rigid body pose relative to the map $\{p_1, p_2, \hdots, p_n\}$ while simultaneously estimating the map. Since the SLAM problem is not observable, the best that one can achieve is an estimate of the rigid body pose and the map up to an unknown constant position and rotation about the vertical axis. In the sequel, we propose an almost globally asymptotically stable observer that estimates the rigid body extended pose (position, velocity, orientation) and the landmark positions up to an unknown constant position and rotation about the $z$-axis.
\section{Observer Design}
In view of \eqref{real_sys_on_group}, we propose the following observer structure on $SE_{3+n}(3)$ with a copy of the dynamics plus an innovation term:

\vspace{-0.2cm}
{\small
\begin{equation} \label{observer_sys_on_group}
    \dot{\hat X}=[\hat X, H]+\hat X V+\Delta \hat X,
\end{equation}}where $\hat X=\mathcal{M}(\hat R,\hat p,\hat v,\hat g,\bold{\hat p}_{\bold L})$ is the estimated state, with $\bold{\hat p}_{\bold L}:=[\hat p_1~\hat p_2 \hdots \hat p_n]$, and  $\Delta \in \mathfrak{se}_{3+n}(3)$ is an innovation term. Note that $\hat R \in SO(3)$, $\hat p \in \mathbb{R}^3$, $\hat v \in \mathbb{R}^3$, $\hat g \in \mathbb{R}^3$ and $\hat p_i \in \mathbb{R}^3$ denote the estimates of $R$, $p$, $v$, $g$ and $p_i$, for each $i\in\{1, 2, \hdots, n\}$, respectively. Let the innovation term for each landmark measurement be as follows:

\vspace{-0.2cm}
{\small
\begin{equation}
    \bold{\tilde y}_i= \bold r_i-\hat X \bold y_i,
\end{equation}}and define the $3\times n$ matrix $\tilde Y:=Q[\bold{\tilde y}_1\cdots \bold{\tilde y}_n]$ where $Q:=[I_3~0_{3 \times (3+n)}]$. The proposed innovation term for the pre-observer \eqref{observer_sys_on_group} is given by 
{\small
\begin{equation}
    \Delta=\mathcal{V}([\sigma]_\times,\tilde YK_p,\tilde YK_v,\tilde YK_g,\tilde Y\Gamma^\top)\in \mathfrak{se}_{3+n}(3),
\end{equation}}where $K_p, K_v, K_g\in\mathbb{R}^n$ and $\Gamma\in\mathbb{R}^{n\times n}$ are constant gains and $\sigma=k^R \left(\hat{g} \times g\right)$ for some $k_R>0$. In explicit form, the observer is written as follows:

\vspace{-0.3cm}
{\small
 \begin{align}
      \dot{\hat R}&=\hat R[\omega^\mathcal{B}+\hat R^\top\sigma]_\times\label{equ:obr_dynamics1}\\
      \dot{\hat p}&=[\sigma]_\times\hat p+\hat v+\sum_{j=1}^{n} k_j^p(\hat R y_j-\hat p+\hat p_j)\label{equ:obr_dynamics2}\\
    \dot{\hat v}&=[\sigma]_\times \hat v+\hat g+\hat R a^\mathcal{B}+\sum_{j=1}^{n} k_j^v(\hat R y_j-\hat p+\hat p_j)\label{equ:obr_dynamics3}\\
    \dot{\hat g}&=[\sigma]_\times\hat g+\sum_{j=1}^{n} k_j^g(\hat 
    R y_j-\hat p+\hat p_j)\label{equ:obr_dynamics4}\\
    \dot{\hat p}_i&=[\sigma]_\times\hat p_i+\sum_{j=1}^{n} \gamma_{ij}(\hat R y_j-\hat p+\hat p_j),\label{equ:obr_dynamics5}
\end{align}}where $k_i^*$ is the $i-$th element of $K_*$ and $\gamma_{ij}$ stands for the element-wise components of $\Gamma$. The structure of our proposed observer \eqref{equ:obr_dynamics1}-\eqref{equ:obr_dynamics5} is shown in Fig. \ref{first_obs}. Now let us proceed to the analysis of the closed-loop system. Using the fact that $\dot{\hat X}^{-1}=-\hat X^{-1} \dot{\hat X} \hat X^{-1}$ and defining the left-invariant geometric error $E=X \hat X^{-1}$, it follows from \eqref{real_sys_on_group} and \eqref{observer_sys_on_group} that

\vspace{-0.2cm}
{\small
\begin{equation}\label{E_on_group}
    \dot E = [E, H]-E\Delta.
\end{equation}}Note that $E=\mathcal{M}(\tilde R, \tilde p, \tilde v, \tilde g, \tilde{\bold{p}}_L)$ where $\tilde R:=R \hat R^T$, $\tilde p:=p-\tilde R \hat p$, $\tilde v:=v-\tilde R \hat v$, $\tilde g:=g-\tilde R \hat g$ and $\tilde{\bold{p}}_L:=[\tilde p_1~\tilde p_2~\hdots~\tilde p_n]$ with $\tilde p_i:=p_i-\tilde R \hat p_i$. It is also worth noting that the error dynamics are independent of both the system's trajectory and the input; a desirable property found in geometric observers on Lie groups, see for instance \cite{barrau2016invariant,Lageman_TAC2010}. Due to the unobservability of the considered SLAM problem, the stability of the closed-loop system is analysed using an \textit{ego-centric} error on the reduced Lie group $SE_{2+n}(3)$. First, consider the reduced state and reduced estimation state given by $X_r:=\mathcal{M}_r(R,v,g,\mathbf 1\otimes p-\mathbf{p}_L)$ and $\hat X_r:=\mathcal{M}_r(\hat R,\hat v,\hat g,\mathbf 1\otimes\hat p-\hat{\mathbf{p}}_L)$, respectively, where $\mathbf 1:=[1\cdots 1]\in\mathbb{R}^{1\times n}$ and the map $\mathcal{M}_r: SO(3)\times \mathbb{R}^3\times \mathbb{R}^3\times \mathbb{R}^{3\times n} \rightarrow \mathbb{R}^{(5+n)\times(5+n)}$ is defined as follows $\mathcal{M}_r(R,x_1,x_2,\bold{x_L}):=\left[\begin{array}{r|r} \begin{matrix} R \qquad \end{matrix} & \begin{matrix} x_1&x_2&\bold{x_L}\end{matrix}\\ \hline 0_{(2+n)\times 3}& I_{2+n}\qquad \end{array}\right]$. It is not difficult to show that the reduced error $E_r=X_r\hat X_r^{-1}$ obeys the following dynamics

{\small
\begin{equation}\label{reduced_E_on_group}
    \dot E_r = [E_r, H_r]-E_r\Delta_r,
\end{equation}}where 
{\small \begin{align*}
&\Delta_r=\mathcal{V}_r([\sigma]_\times,\tilde YK_v,\tilde YK_g,\tilde Y(K_p\otimes \mathbf 1-\Gamma^\top))\in \mathfrak{se}_{2+n}(3)\\
&H_r:=\left[\begin{array}{r|r} \begin{matrix} 0_3 \qquad \end{matrix} & \begin{matrix} 0_{3\times (2+n)}\end{matrix}\\ \hline 0_{(2+n)\times 3}& \qquad S_r \qquad \end{array}\right] \end{align*}}with ~~~{\small $S_r=\begin{bmatrix}
        0&0&1&\cdots&1&1\\
        1&0&0&\cdots&0&0\\
        0&0&0&\cdots&0&0\\
        0&0&0&\cdots&0&0\\
        \vdots&\vdots&\vdots&\ddots&\vdots&\vdots\\
          0&0&0&\cdots&0&0\\
    \end{bmatrix} \in \mathbb{R}^{(2+n)\times (2+n)}$.}\\
The map $\mathcal{V}_r: \mathfrak{so}(3)\times\mathbb{R}^3\times \mathbb{R}^3\times \mathbb{R}^{3\times n} \rightarrow \mathbb{R}^{(5+n)\times(5+n)}$ is defined as follows:
{\small \begin{equation*} \mathcal{V}_r(\Omega,\xi_1,\xi_2,\boldsymbol{\xi_L}):=\left[\begin{array}{r|r} \begin{matrix} \Omega \qquad \end{matrix} & \begin{matrix} \xi_1&\xi_2&\boldsymbol{\xi_L}\end{matrix}\\ \hline 0_{(2+n)\times 3}& 0_{2+n}\qquad \end{array}\right]. \end{equation*}}
\begin{figure}[t]
  \centering
\includegraphics[width=0.75\linewidth,height=3cm]{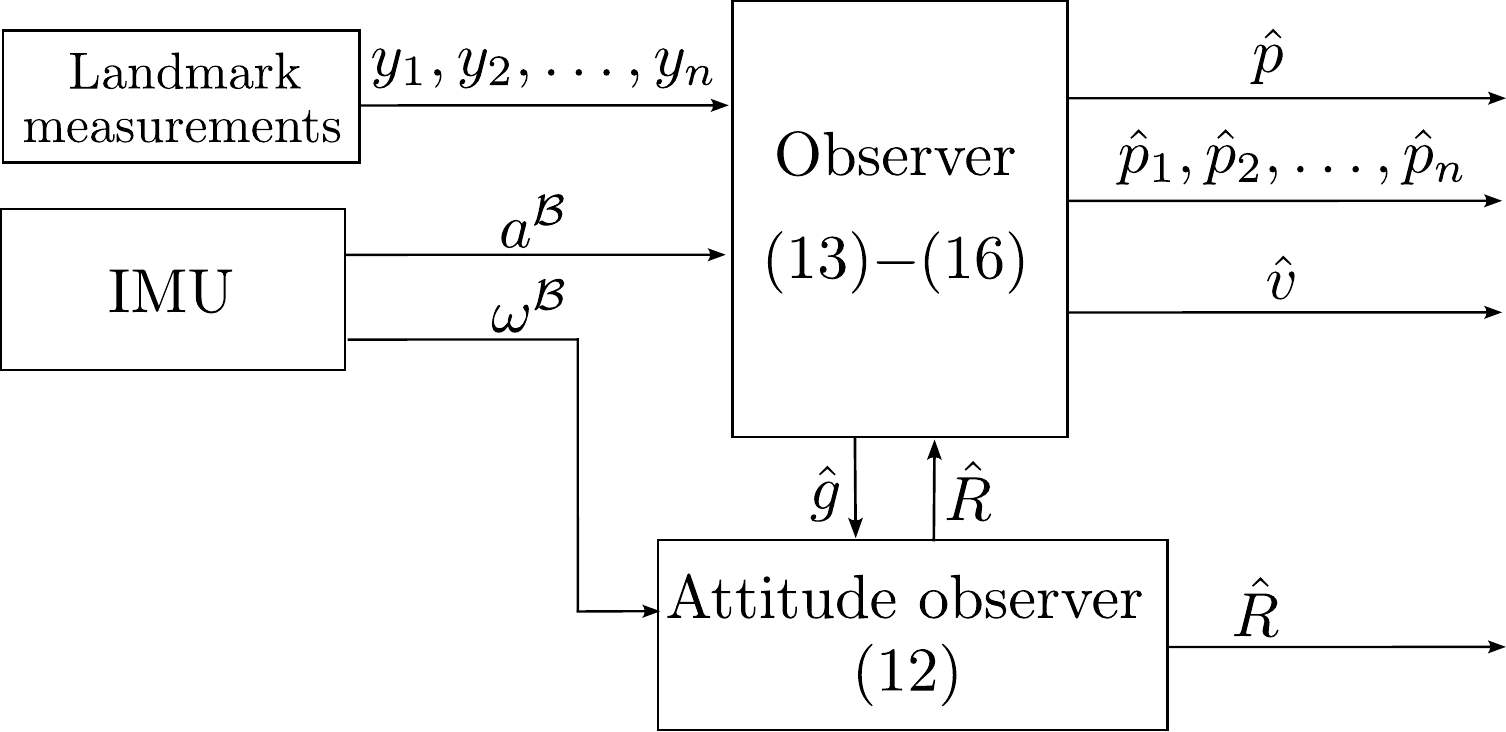}
  \caption{Structure of the proposed SLAM observer.}
  \label{first_obs}
\end{figure}
Letting $E_r=\mathcal{M}_r(\tilde R,\tilde v,\tilde g,\mathbf 1\otimes\tilde p-\tilde{\mathbf{p}}_L)$ and defining $x:=\left[\mathrm{vec}(\mathbf 1\otimes\tilde p-\tilde{\mathbf{p}}_L)^\top, \tilde v^\top, \tilde g^\top\right]^\top$, one can derive the following Linear Time Invariant (LTI) system
{\small \begin{equation}\label{lit_cls}
    \dot x = \left((A-LC)\otimes I_3\right)x,
\end{equation}}where {\small \begin{align}
    A&=\begin{bmatrix}
        0_{n \times n}&B_n\\
        0_{2\times n}&D\\
    \end{bmatrix},
    C=\begin{bmatrix}
        I_n&0_{n\times 2}
    \end{bmatrix},
    L=\begin{bmatrix}
       \mathbf 1^\top \otimes K_p^\top-\Gamma\\
        K_v^\top\\
        K_g^\top\\
    \end{bmatrix}\label{matrix:L}
\end{align}}
with {\small $B_n=\begin{bmatrix}
        \mathbf 1^\top & 0_{n \times 1}
    \end{bmatrix}\in \mathbb{R}^{n \times 2}~~\text{and}~~D=\begin{bmatrix}
        0&1\\
        0&0
    \end{bmatrix}$}.
\begin{lem}\label{lemma:lti_sys}
  The pair $(A,C)$ in \eqref{lit_cls} is Kalman observable, \ie, there exists a matrix gain $L$ such $(A-LC)$ is Hurwitz and the equilibrium point $x=0$ for the closed-loop system \eqref{lit_cls} is globally exponentially stable.
\end{lem}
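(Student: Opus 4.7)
My plan is to decompose the lemma into three standard steps: (i) establish Kalman observability of the pair $(A,C)$; (ii) invoke the pole-placement theorem to produce a gain $L$ rendering $A-LC$ Hurwitz; and (iii) transfer this Hurwitz property to the Kronecker lift $(A-LC)\otimes I_3$ to obtain global exponential stability of the LTI system \eqref{lit_cls}.

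For step (i), the cleanest route is the Popov--Belevitch--Hautus (PBH) rank test. Since $A$ is block upper-triangular with diagonal blocks $0_{n\times n}$ and the nilpotent matrix $D$, its spectrum is $\{0\}$, so it suffices to verify at $\lambda=0$ that the $(2n+2)\times(n+2)$ matrix obtained by stacking $A$ on top of $C$ has full column rank. A direct inspection exploiting the sparsity of $B_n=[\mathbf{1}^\top\ 0_{n\times 1}]$ and of $D$ makes this immediate: the $I_n$ block contributed by $C$ spans the first $n$ coordinate directions, while the $\mathbf{1}^\top$ column of $B_n$ and the unit entry of $D$ each supply one additional independent direction. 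Equivalently, one may compute the observability matrix directly: $C$, $CA$, and $CA^2$ isolate in turn the three ``block columns'' $I_n$, $\mathbf{1}^\top$, $\mathbf{1}^\top$, while $CA^3=0$ because $D^2=0$, and the stacked rows are visibly of rank $n+2$.

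Step (ii) then follows from the classical pole-placement theorem: $(A,C)$ being observable, there exists an $L$ such that $A-LC$ has any prescribed (conjugation-symmetric) spectrum in $\mathbb{C}$, and in particular one can place all its eigenvalues in the open left half-plane. Note that the parametrization of $L$ in \eqref{matrix:L} through $K_p, K_v, K_g, \Gamma$ is surjective onto $\mathbb{R}^{(n+2)\times n}$ (since $\Gamma$ is unconstrained), so no restriction of the feasible gain set is incurred by that structural choice.

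For step (iii), I rely on the spectral identity $\sigma(M\otimes N)=\{\lambda\mu:\lambda\in\sigma(M),\ \mu\in\sigma(N)\}$ for Kronecker products. Applied to $(A-LC)\otimes I_3$, this gives $\sigma((A-LC)\otimes I_3)=\sigma(A-LC)$, each eigenvalue counted with multiplicity three, so the Kronecker lift inherits the Hurwitz property of $A-LC$. Since \eqref{lit_cls} is LTI, global exponential stability of $x=0$ follows immediately. I do not anticipate any serious obstacle; the most delicate piece of bookkeeping is the PBH/observability-matrix computation, where the block sparsity of $B_n$, $D$, and $C$ must be tracked carefully, but everything else is a textbook application.
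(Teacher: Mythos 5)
Your proposal is correct and follows essentially the same route as the paper: establish observability of $(A,C)$ (the paper computes the observability matrix with rows $C$, $CA=[0~B_n]$, $CA^2=[0~\bar B_n]$ and notes full column rank; your PBH check at $\lambda=0$ and your direct rank count are equivalent), then invoke pole placement to get a Hurwitz $A-LC$ and conclude GES of the LTI system. Your additional remarks, namely that the spectrum of $(A-LC)\otimes I_3$ coincides with that of $A-LC$ (with tripled multiplicities) and that the parametrization of $L$ via $K_p,K_v,K_g,\Gamma$ is surjective, only make explicit details the paper leaves implicit.
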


\begin{proof}
    To prove this lemma, let us analyze the observability of the closed-loop system \eqref{lit_cls}. One can find the observability matrix of the system \eqref{lit_cls} to be 

    \vspace{-0.2cm}
{\small \begin{align}
    \mathcal{O}=\begin{bmatrix}
        I_n&0_{n\times 2}\\
        0_n&B_n\\
        0_n&\bar B_n\\
        0_n&0_{n\times 2}\\
        \vdots&\vdots\\
        0_n&0_{n\times 2}\\
    \end{bmatrix}, ~~\text{where}~~\bar B_n=\begin{bmatrix}
        0_{n \times 1} & \mathbf 1^\top
    \end{bmatrix}.
\end{align}}It follows from the fact that $\text{det}(\mathcal{O}^\top \mathcal{O})\neq0$ there exists a gain matrix $L$ such that the matrix $A-LC$ is Hurwitz. This guarantees the global exponential stability of the closed-loop system \eqref{lit_cls} at the equilibrium point $x=0$.
\end{proof}
Now let us analyse the dynamics of the rotation error $\tilde R$. In view of the dynamics of $E_r$ \eqref{reduced_E_on_group}, one has
{\small \begin{align}
    \dot{\tilde R}&=\tilde R\left[-k^R \left(\tilde R^\top g \times g\right)-\Pi(t)x\right]_\times,\label{R_tilde}
\end{align}}where $\Pi(t):=\left[0_3~0_3~\hdots~0_3~k^R g^\times \tilde R^\top\right] \in \mathbb{R}^{3 \times 3(n+2)}$. One checks that $||\Pi(t)||_F=\sqrt{2} k^R ||g||$. The closed-loop attitude dynamics \eqref{R_tilde} consists of two terms. The first term drives the estimated attitude to the actual one only up to a rotation about $g$ by aligning the vector $\tilde R^\top g$ with $g$, while the second term vanishes as the state $x$ converges to zero (according to Lemma \ref{lemma:lti_sys}) and the matrix $\Pi(t)$ remains bounded. Note that the proposed attitude observer estimates the attitude only up to a rotation about $g$ because the innovation term $\sigma$ relies only on a single vector $g$ \cite{Mahony_TAC2008}. Therefore, to simplify the stability analysis of \eqref{R_tilde}, we derive the following reduced attitude dynamics:

\vspace{-0.2cm}
{\small
\begin{align}
        \dot{\breve g}&=k^R \left[ \breve g \times g \right]_\times \breve g-[\breve g]_\times \Pi(t) x, \label{g_tilde}
    \end{align}}where $\breve g:= \tilde R^\top g$. The matrix $[\breve g]_\times \Pi(t)$ is bounded since $||\breve g||=||g||$ and $||\Pi(t)||_F=\sqrt{2} k^R ||g||$. Considering \eqref{lit_cls} and \eqref{g_tilde}, one obtains the following closed-loop system: 
    {\small \begin{align}
    \dot{\breve g}&=k^R \left[ \breve g \times g \right]_\times \breve g-[\breve g]_\times \Pi(t) x \label{csc_sys1}\\
    \dot x &= \left(\left(A-LC\right)\otimes I_3\right)x.\label{csc_sys2}
\end{align}}The above closed-loop system can be seen as a cascaded nonlinear system evolving on $\mathbb{S}^2_g \times \mathbb{R}^{3\left(n+2\right)}$ where $\mathbb{S}_g^2:=\{u \in \mathbb{R}^3:~u^\top u=||g||^2\}$.
Now, let us state the main result of this work.
\begin{thm}\label{theorem:thm1}
    Consider the closed-loop system \eqref{csc_sys1}-\eqref{csc_sys2}. Let $k^R>0$. Pick $L$ such that $(A-LC)$ is Hurwitz. Then, the equilibrium $(\breve g, x)=(g, 0)$ of the system \eqref{csc_sys1}-\eqref{csc_sys2} is AGAS.
\end{thm}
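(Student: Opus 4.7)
The plan is to exploit the cascade structure of \eqref{csc_sys1}--\eqref{csc_sys2}: by Lemma~\ref{lemma:lti_sys}, the driving subsystem $\dot x=((A-LC)\otimes I_3)x$ is GES, so $x(t)$ decays exponentially and the interconnection term $-[\breve g]_\times\Pi(t)x$ in \eqref{csc_sys1} is a time-vanishing perturbation of the autonomous reduced-attitude dynamics $\dot{\breve g}=k^R[\breve g\times g]_\times \breve g$. The proof therefore splits into: (i) establish AGAS of the unperturbed reduced-attitude system on $\mathbb{S}_g^2$; (ii) show that forward completeness is automatic on the compact state space; and (iii) lift AGAS of the autonomous part to the cascade via a standard vanishing-perturbation / cascade stability argument.

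For step (i), I would first use the triple-product identity to expand $[\breve g\times g]_\times\breve g=\|g\|^2 g-(\breve g^\top g)\breve g$, and then adopt the Lyapunov candidate $V(\breve g)=\|g\|^2-g^\top\breve g$, which is smooth on $\mathbb{S}_g^2$, nonnegative, vanishes only at $\breve g=g$, and attains its maximum at $\breve g=-g$. A direct computation gives $\dot V=-k^R(\|g\|^4-(\breve g^\top g)^2)\le 0$, with equality only on the critical set $\{g,-g\}$. LaSalle's invariance principle then forces every trajectory of the autonomous system to converge to this set. Local instability of $\breve g=-g$ is then established by linearizing \eqref{csc_sys1} with $x\equiv 0$ at $\breve g=-g$, which yields a positive-definite linear part on the tangent space $T_{-g}\mathbb{S}_g^2$, so $-g$ is a hyperbolic repeller; hence its basin of attraction is a single point and AGAS of $\breve g=g$ on $\mathbb{S}_g^2$ follows.

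For step (iii), I would first note that $\breve g$ lives on the compact sphere $\mathbb{S}_g^2$ (so $\|\breve g\|\equiv\|g\|$), and $\|[\breve g]_\times\Pi(t)x\|\le \sqrt{2}\,k^R\|g\|^2\|x(t)\|$, which together with $x(t)\to0$ exponentially ensures both forward completeness and integrability of the perturbation along solutions. Linearizing the autonomous reduced attitude at $\breve g=g$ shows that this equilibrium is locally exponentially stable, so by standard results on cascades with globally exponentially stable driving systems and locally exponentially stable autonomous driven part (e.g., Panteley--Loría type theorems, or a converse-Lyapunov argument combined with the integrability of the interconnection), the equilibrium $(g,0)$ of \eqref{csc_sys1}--\eqref{csc_sys2} is locally exponentially stable. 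Combining this with the AGAS of the autonomous $\breve g$-dynamics and the GES of the $x$-dynamics, every trajectory of the cascade whose $\breve g$-component does not limit to $-g$ is attracted to $(g,0)$.

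The main obstacle is the last point: ruling out convergence to the spurious equilibrium $(-g,0)$ for all but a measure-zero set of initial conditions in the \emph{perturbed} cascade. For the autonomous system this follows from $-g$ being a hyperbolic repeller on $\mathbb{S}_g^2$, but once the time-varying perturbation $[\breve g]_\times\Pi(t)x$ is reintroduced, one must argue that the stable set of $(-g,0)$ in $\mathbb{S}_g^2\times\mathbb{R}^{3(n+2)}$ remains of measure zero. I would handle this by invoking the (non-autonomous) stable-manifold theorem at the hyperbolic equilibrium $(-g,0)$: since the linearization at $(-g,0)$ inherits unstable eigen-directions from the autonomous attitude block and only stable directions from the $x$-block, its local stable manifold has positive codimension, hence zero Lebesgue measure, and pulling back along the flow keeps the global stable set negligible. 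Together with the local exponential stability of $(g,0)$ and forward completeness, this yields AGAS and completes the proof.
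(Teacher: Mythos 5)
Your overall strategy is viable and genuinely different from the paper's. Step (i) is essentially the paper's Lemma~\ref{lem1} (your $V(\breve g)=\|g\|^2-g^\top\breve g$ coincides with $\mathcal{L}_1=\tfrac12\|g-\breve g\|^2$ on $\mathbb{S}^2_g$, and your linearization at $-g$ is exactly the computation the paper performs), but for the perturbed cascade the paper does not use a vanishing-perturbation plus stable-manifold argument: it invokes the almost-global ISS framework of \cite{Angeli_TAC2011}, checking compactness (A0), the Lyapunov condition at $\breve g=g$ (A1), instability of $\breve g=-g$ via the same linearization (A2), and an ultimate-boundedness estimate $\dot{\mathcal{L}}_1\leq-2k^R\|g\|^2\mathcal{L}_1+c_1+c_2\|x\|$, and then concludes AGAS of the cascade from GES of \eqref{csc_sys2} plus almost-global ISS of \eqref{csc_sys1}. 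That route only ever uses the uniform bound $\|\Pi(t)\|_F=\sqrt2 k^R\|g\|$, so it sidesteps the non-autonomous stable-manifold machinery you propose (which, while in principle workable because $(-g,0)$ remains an equilibrium for all $t$ and the linearization is block-triangular with an exponential dichotomy, is considerably heavier and requires care with the fact that $\Pi(t)$ is itself trajectory-dependent).

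The genuine gap is the convergence dichotomy. You assert that ``every trajectory of the cascade whose $\breve g$-component does not limit to $-g$ is attracted to $(g,0)$,'' but nothing in your argument establishes that the $\omega$-limit set of an arbitrary solution of the \emph{perturbed} attitude subsystem is contained in $\{g,-g\}$ in the first place; without this, showing that the stable set of $(-g,0)$ has measure zero does not yield almost-global attractivity of $(g,0)$. The Panteley--Lor\'ia-type cascade results you cite do not supply this step: they presuppose GAS (or ISS-type properties) of the unforced driven subsystem, not almost-global asymptotic stability, and they only give you the local exponential stability of $(g,0)$ that you already argued separately. The gap is fillable with the ingredients you have: from $\dot V=-k^R\|g\times\breve g\|^2-g^\top[\breve g]_\times\Pi(t)x\leq-k^R\|g\times\breve g\|^2+\sqrt2 k^R\|g\|^3\|x(t)\|$ and the exponential (hence integrable) decay of $x$, integrate to get $\int_0^\infty\|g\times\breve g(t)\|^2\,dt<\infty$, invoke Barbalat (the integrand is uniformly continuous since $\dot{\breve g}$ is bounded on $\mathbb{S}^2_g$ with bounded input) to conclude $\breve g(t)\to\{g,-g\}$, and use connectedness of the $\omega$-limit set to obtain convergence to one of the two points; alternatively, appeal to limit-set theorems for asymptotically autonomous systems. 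Once this dichotomy is in place, your measure-zero argument for the basin of $(-g,0)$, combined with the local exponential stability of $(g,0)$, completes the AGAS claim.
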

\begin{proof}
    See Appendix \ref{app_1}
\end{proof}
%Theorem \ref{theorem:thm1} shows the observer gains can be chosen such that the gain matrix $L$, defined in \eqref{matrix:L} using the observer gains can be tuned to guarantee stability and almost global convergence of the estimation errors $(\breve g-g, x)$ to zero. 
The AGAS result in Theorem \ref{theorem:thm1} is mainly due to the AGAS of the reduced attitude dynamics \eqref{csc_sys1}, which is the strongest stability result that can be obtained considering smooth time-invariant vector fields on the compact manifold $\mathbb{S}^2_g$, see for instance \cite{Koditschek}.

From the fact that $\breve g \rightarrow g$ (\ie, $R^\top g\rightarrow \hat R^\top g$), the rigid body orientation is estimated almost globally up to an unknown constant rotation about the vertical axis (gravity direction). In other words, $\tilde R\to R^*$ for some constant rotation $R^*$. Now, from $x\to 0$, one has $\tilde Y\to 0$ which implies that $\Delta\to 0$ and hence $\dot E\to [E,H]$. This further implies that $\dot{\tilde p}\to 0$ and, hence, $\tilde p=p-\tilde R\hat p=p-R^*\hat p$ converges to a constant (denoted $p^*$). The estimated position will converge to $(R^*)^\top (p-p^*)$.
%is therefore estimated up to a constant position and rotation about the vertical axis. 
Finally, from $\tilde Y\to 0$ and $\hat p\to (R^*)^\top(p-p^*)$, the landmark position estimates $\hat p_i$ converge to  $(R^*)^\top(p_i-p^*)$ and are therefore estimated up to the same constant pose. Finally, it should be noted that once the constant matrix gain $L$ is chosen to guarantee stability of the closed-loop matrix $(A-LC)$, the designer should select subsequently the observer gains $K_p,K_v,K_g$ and $\Gamma$. In view of \eqref{matrix:L}, the gains $K_v$ and $K_g$ are uniquely determined from the value of $L$ while the gains $K_p$ and $\Gamma$ should be chosen such as $\mathbf 1^\top \otimes K_p^\top-\Gamma=\begin{bmatrix}
    I_n&0&0
\end{bmatrix}L$. It follows that once $L$ is designed such that $(A-LC)$ is Hurwitz, one has multiple choices for $K_p$ and $\Gamma$. This is due to the fact that the SLAM system operates relative to the robot's own frame of reference (ego-centric). In essence, choosing $K_p$ and $\Gamma$ will influence
the estimated states, but does not influence the dynamics of the SLAM error $x$. 
A possible choice is to simply select $K_p=0$ and $\Gamma=-\begin{bmatrix}
    I_n&0&0
\end{bmatrix}L$. Another possible choice is to consider the `static environment' property and find the pair of gains $(K_p,\Gamma)$ that minimizes a criterion similar to \cite{VANGOOR_aut_2021}. A detailed analyzes of this direction, however, exceeds the scope of this work.

\section{SIMULATION}\label{s6}
In this section, we present some numerical simulations to illustrate the performance of the proposed observer. We consider a rigid body system moving along the trajectory $p(t)=3[\cos{t}~\sin{t}~1]^\top$, which represents a circular path with a diameter of 3 meters at a constant height of 3 meters. For the rotational motion, we assume that it is driven by an angular velocity $\omega^\mathcal{B}(t)=[-\cos{2t}~1~\sin{2t}]^\top$ with $R(0)=I_3$. Furthermore, we consider a set of fifteen landmarks that are randomly distributed in the space available for measurements. On the other hand, for the proposed observer, we consider the following initial conditions: $\hat R(0)=\mathcal{R}_\alpha(0.5\pi,u)$ with $u=[1~1~1]^\top$, $\hat p(0)=[0~0~0]^\top$, $\hat v(0)=[0~0~0]^\top$, $\hat g(0)=[0~0~0]^\top$, and $\hat p_i(0)=[0~0~0]^\top$ for every $i \in \{1, 2, \hdots, n\}$. We choose the following desired eigenvalues to design the matrix $L$: $\{-1~-2~-3~-4~-1~-2~-3~-4~-1~-2~-3~-4~-1~-2~-3~-4~-1\}$. For simplicity we choose $K_p=\mathbf 1^\top$. Then, given \eqref{matrix:L}, the gains $K_v, K_g$ and $\Gamma$ are chosen according to $L$ and $K_p$. We also choose $k_R=1$. The IMU and relative position measurements are simulated as follows: $\omega^\mathcal{B}_m=\omega^\mathcal{B}+\eta_\omega$, $a^{\mathcal{B}}=R^\top(\ddot{p}-g)+\eta_a$, and $y_i=R^\top\left(p-p_i\right)+\eta_{y_i}$  where $\eta_\omega$, $\eta_a$ and $\eta_{y_i}$ are zero-mean Gaussian noise signals with $0.01$, $0.2$ and $0.1$ variance, respectively. From the fact that the SLAM problem is not observable, $E$ is expected to converge to a constant value $E^*=\mathcal{M}(R^*, p^*, 0, 0, p^* \otimes \mathbf 1)$ instead of the identity, where $R^*=\tilde R(\infty)$ and $p^*=\tilde p(\infty)$. Therefore, for better visualization, we apply this constant transformation $(R^*, p^*)$ to correct the observer estimates back to the reference frame and the results are shown in Fig. \ref{trajectory} and Fig. \ref{map}.
\begin{figure}[H]
    \centering
    \includegraphics[width=0.8\linewidth]{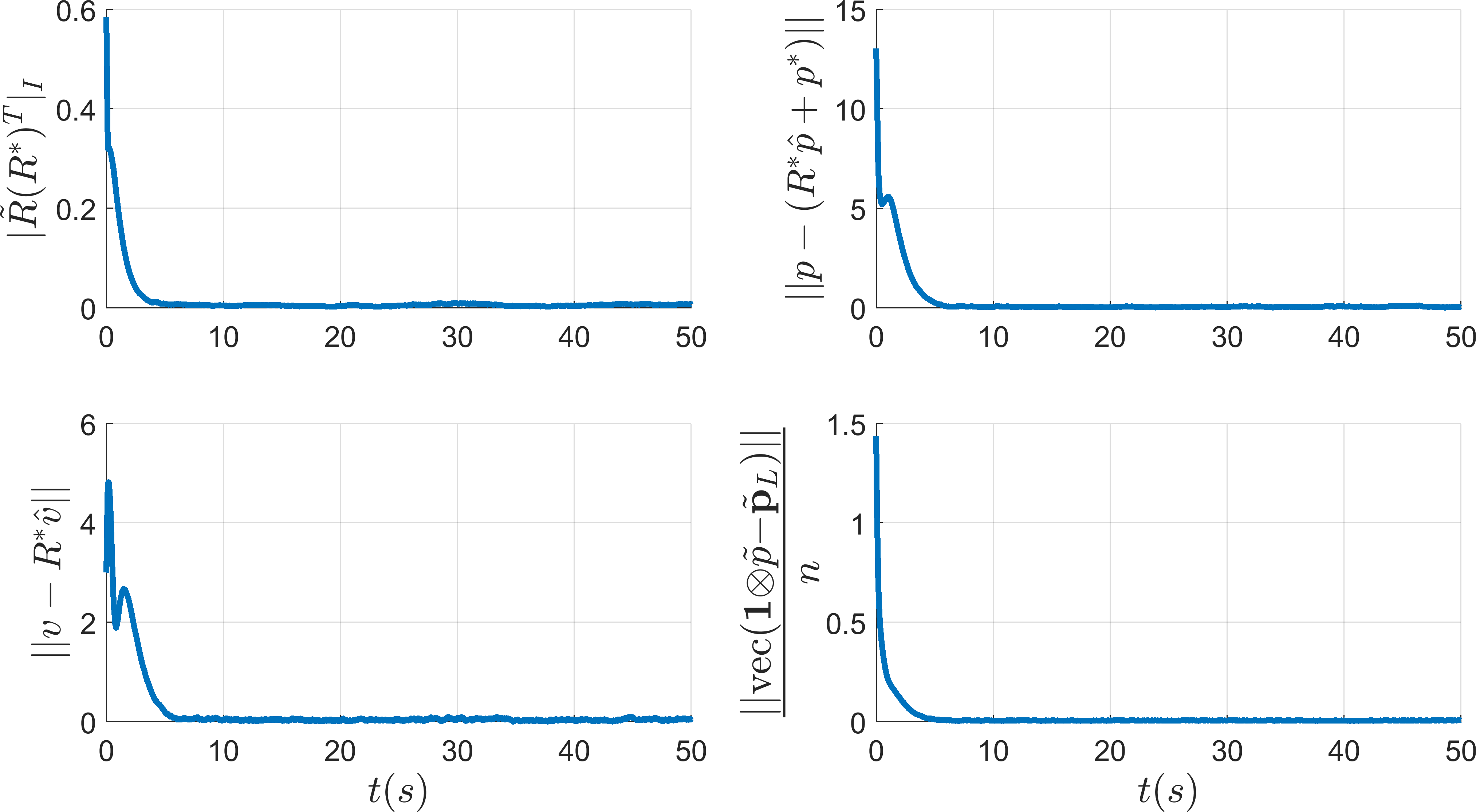}
    \caption{Estimation errors of rotation, position, velocity and landmarks considering the constant transformation $(R^*, p^*)$}
    \label{trajectory}
\end{figure}
\begin{figure}[H]
    \centering
    \includegraphics[width=0.8\linewidth]{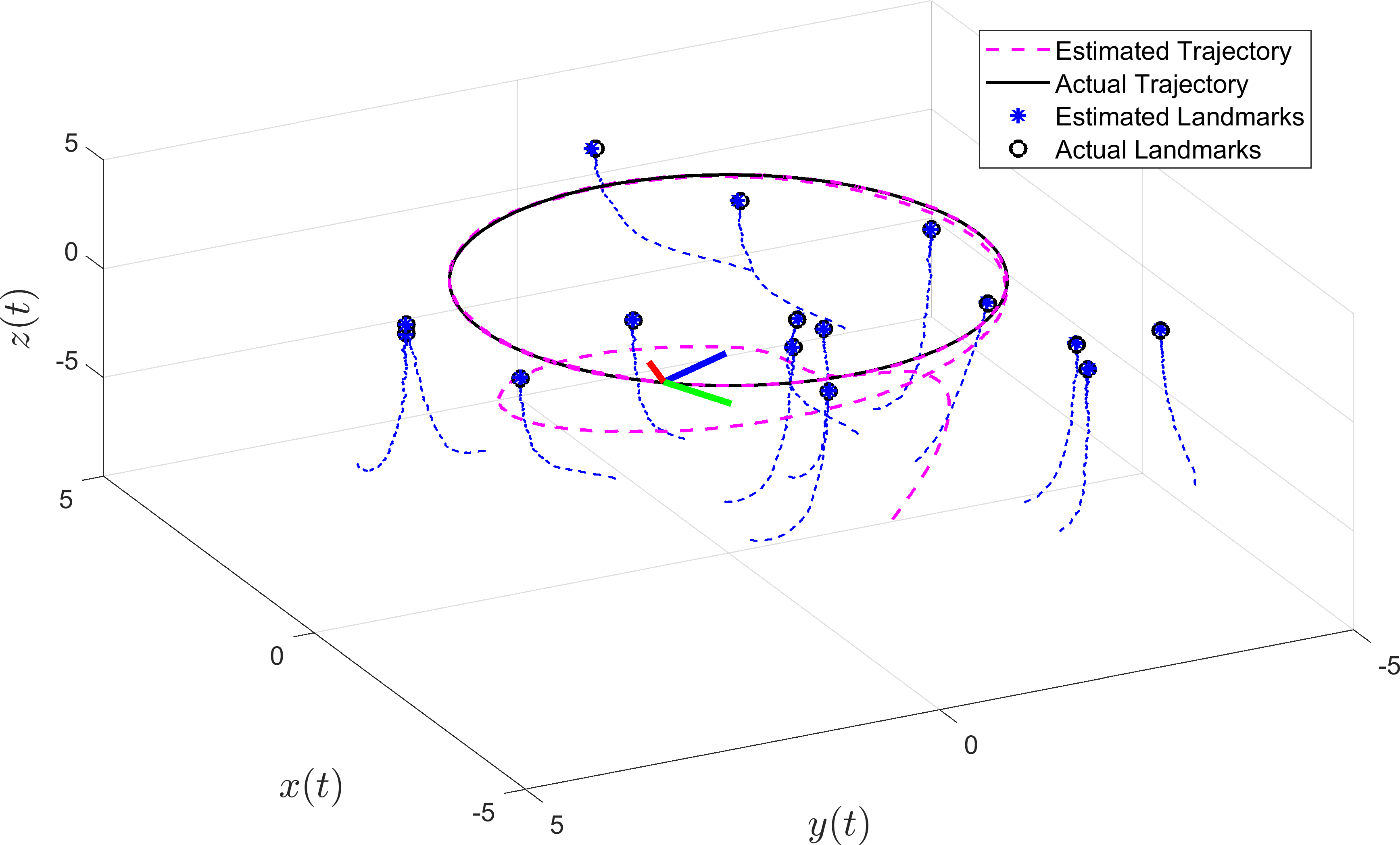}
    \caption{Estimated trajectories in the presence of noisy measurements and considering the constant transformation $(R^*, p^*)$. To avoid cluttering the figure, we have omitted the initial portions of the estimated landmark trajectories (blue dashed lines).}
    \label{map}
\end{figure}
%Next, we perform another simulation by considering only the measurements from the IMU sensor (\ie, the angular velocity and acceleration measurements) in our proposed observer. In this simulation, we also assume that the measurements are corrupted with the same noise as in the previous simulation and that the initial conditions are zero. The simulation results are shown in Fig. \ref{map_IMU_only_3D}. As expected, the estimated trajectory drifts continuously with time due to the integration of the noisy measurement.
\begin{comment}
    \begin{figure}[H]
    \centering
    \includegraphics[width=0.95\linewidth]{map_IMU_only_3D.png}
    \caption{Estimated trajectory relying only noisy IMU measurements.}
    \label{map_IMU_only_3D}
\end{figure}
\end{comment}

\section{CONCLUSIONS}\label{s7}
In this paper, we proposed a new nonlinear geometric observer for the SLAM problem that leverages IMU data alongside landmark measurements. The observer effectively estimates (almost globally asymptotically) the robot's pose and map up to an unknown constant position and orientation about the gravity direction. One of the key advantages of our approach is its computational efficiency; the observer utilizes constant gains, resulting in a complexity that increases linearly with the number of landmarks. This stands in contrast to traditional Extended Kalman Filter (EKF)-based techniques, where complexity increases quadratically with the number of landmarks due to the covariance update process. Future work could explore several promising extensions, including the incorporation of biased measurements to enhance robustness against sensor noise and drift, as well as adapting the proposed observer to time-varying landmark positions that reflect dynamic environments. 

%\addtolength{\textheight}{-10cm}   % This command serves to balance the column lengths
                                  % on the last page of the document manually. It shortens
                                  % the textheight of the last page by a suitable amount.
                                  % This command does not take effect until the next page
                                  % so it should come on the page before the last. Make
                                  % sure that you do not shorten the textheight too much.

%%%%%%%%%%%%%%%%%%%%%%%%%%%%%%%%%%%%%%%%%%%%%%%%%%%%%%%%%%%%%%%%%%%%%%%%%%%%%%%%

%%%%%%%%%%%%%%%%%%%%%%%%%%%%%%%%%%%%%%%%%%%%%%%%%%%%%%%%%%%%%%%%%%%%%%%%%%%%%%%%

\appendices \label{app_1}
\section{Proof of Theorem 1} \label{app_1}
It follows from Lemma \ref{lemma:lti_sys} that the system \eqref{csc_sys2} is Globally Exponentially Stable (GES). Note also that system \eqref{csc_sys2} evolves independently of $\breve g$. Therefore, to determine the stability properties of the overall system \eqref{csc_sys1}-\eqref{csc_sys2}, we will first present two results considering the reduced attitude closed-loop system \eqref{csc_sys1}. The first result establishes the stability properties of the system \eqref{csc_sys1} with $x=0$, and the second studies the Input to State Stability (ISS) properties of \eqref{csc_sys1} with respect to $x$. Finally, we will conclude the stability properties of the overall system \eqref{csc_sys1}-\eqref{csc_sys2}. The following lemma represents the first result related to the system \eqref{csc_sys1} with $x=0$.
\begin{lem}\label{lem1}
    Let $k^R>0$. Then, the following statements hold:
     \begin{enumerate}[i)]
     \item System \eqref{csc_sys1}, with $x=0$, has two equilibria: the desired equilibrium $\breve g=g$ and the undesired one $\breve g=-g$.\label{eq_p}
     \item The desired equilibrium $\breve g=g$, for system \eqref{csc_sys1} with $x=0$, is almost globally asymptotically stable on $\mathbb{S}^2_g$.\label{des_eq}
     \end{enumerate}
\end{lem}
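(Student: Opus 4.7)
The plan for both statements relies on a single Lyapunov function on the invariant sphere $\mathbb{S}_g^2$, together with a standard cross-product identity; no linearization is strictly needed.

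For part \ref{eq_p}, I would simply set $\dot{\breve g}=0$ in \eqref{csc_sys1} with $x=0$, which reduces to $(\breve g\times g)\times \breve g = 0$. The vector $\breve g\times g$ is orthogonal to $\breve g$, so their cross product vanishes iff $\breve g\times g=0$, i.e., $\breve g$ is collinear with $g$. Combined with the constraint $\|\breve g\|=\|g\|$ (recall $\breve g\in\mathbb{S}_g^2$), the only equilibria are $\breve g = \pm g$.

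For part \ref{des_eq}, I would take the Lyapunov candidate $V(\breve g) := \|g\|^2 - g^\top \breve g$, which is non-negative on $\mathbb{S}_g^2$, vanishes only at $\breve g=g$, and attains its maximum $2\|g\|^2$ only at $\breve g=-g$. The BAC-CAB identity gives $(\breve g\times g)\times\breve g = \|g\|^2 g - (g^\top \breve g)\,\breve g$, so that
\begin{equation*}
\dot V = -g^\top \dot{\breve g} = -k^R\bigl(\|g\|^4-(g^\top \breve g)^2\bigr) = -k^R\, V\,(2\|g\|^2 - V),
\end{equation*}
which is strictly negative on $\mathbb{S}_g^2\setminus\{\pm g\}$. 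This immediately yields local Lyapunov stability of $\breve g=g$. For attractivity from almost every initial condition, I would argue that any trajectory starting at $\breve g(0)\neq -g$ satisfies $V(\breve g(0))<2\|g\|^2$; since $V$ is non-increasing along trajectories, the state can never reach $\breve g=-g$ in forward time, and by LaSalle's invariance principle it must converge to the only remaining point where $\dot V = 0$, namely $\breve g=g$. Hence the basin of attraction of $\breve g=g$ is $\mathbb{S}_g^2\setminus\{-g\}$, a set of full measure on $\mathbb{S}_g^2$.

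The main obstacle, if any, lies in justifying that the exclusion of $\{-g\}$ is acceptable in the definition of AGAS — the standard topological obstruction to global stability on a compact manifold without boundary, as already acknowledged by the paper's citation of \cite{Koditschek}. For completeness I would also sketch that $\breve g=-g$ is in fact hyperbolic and unstable: writing $\breve g = -g + \epsilon$ with $\epsilon$ in the tangent plane $T_{-g}\mathbb{S}_g^2$ (so $g^\top\epsilon=0$) and linearizing yields $\dot\epsilon = k^R\|g\|^2\,\epsilon$, a repeated positive eigenvalue $k^R\|g\|^2>0$. This confirms that $-g$ is a repeller, strengthening the measure-zero conclusion for its stable set and delivering the desired AGAS of $\breve g=g$ on $\mathbb{S}_g^2$.
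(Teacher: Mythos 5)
Your proof is correct and follows essentially the same route as the paper: on $\mathbb{S}_g^2$ your function $V(\breve g)=\|g\|^2-g^\top\breve g$ coincides with the paper's $\mathcal{L}_1=\tfrac{1}{2}\|g-\breve g\|^2$, and the decrease $\dot V=-k^R\|g\times\breve g\|^2$ is the same computation. The only (immaterial) difference is how the undesired equilibrium is excluded: the paper uses the second function $\mathcal{L}_2=\tfrac{1}{2}\|g+\breve g\|^2$ to show $\breve g=-g$ is a repeller, while you use sublevel-set invariance together with LaSalle (and an optional linearization at $-g$), both of which validly yield the basin $\mathbb{S}_g^2\setminus\{-g\}$ and hence AGAS.
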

\begin{proof}
From \eqref{csc_sys1}, with $x=0$, it is clear that the equilibrium points on $\mathbb{S}^2_g$ are $\breve g=g$ and $\breve g=-g$. The time-derivative of the following positive definite function  $\mathcal{L}_1=\frac{1}{2}||g-\breve g||^2$, along the trajectories of \eqref{csc_sys1} with $x=0$, is given by  $\dot{\mathcal{L}}_1=-k^R||g \times \breve g||^2$. The time-derivative of the following positive definite function $\mathcal{L}_2=\frac{1}{2}||g+\breve g||^2$, along the trajectories of \eqref{csc_sys1} with $x=0$, is given by  $\dot{\mathcal{L}}_2=k^R||g \times \breve g||^2$. Therefore, the equilibrium point $\breve g=g$ is stable and the equilibrium point $\breve g=-g$ is unstable (repeller). Almost global asymptotic stability on $\mathbb{S}^2_g$ of $\breve g=g$ immediately follows. The time derivative of $\mathcal{L}_1$ and $\mathcal{L}_2$ were obtained using the fact that $u^\top [v]_\times u=0$ and $u^\top[u \times v]_\times v =-||u \times v||^2$ for every $u, v \in \mathbb{R}^3$. This completes the proof.
\end{proof}
Next, we will study the ISS property of \eqref{csc_sys1}, with respect to $x$ using the notion of almost global ISS introduced in \cite{Angeli_TAC2011}. 

\begin{lem}\label{lemma:ISS}
    Given $k^R>0$, the system \eqref{csc_sys1} is almost globally ISS with respect to the equilibrium point $\breve g =g$ and the input $x$.
\end{lem}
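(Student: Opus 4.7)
The plan is to extend the Lyapunov argument of Lemma \ref{lem1} to accommodate the input $x$. Using the candidate $\mathcal{L}_1 = \frac{1}{2}\|g - \breve g\|^2$, and remembering that $\breve g^\top \dot{\breve g} = 0$ since $\breve g$ evolves on the sphere $\mathbb{S}^2_g$, I would show
\[
\dot{\mathcal{L}}_1 = -k^R \|g\times \breve g\|^2 + (g\times\breve g)^\top \Pi(t)\, x,
\]
where the first term reproduces the computation already carried out in Lemma \ref{lem1} and the cross term is obtained via the identity $a^\top(b\times c) = c^\top(a\times b)$ applied to $g^\top(\breve g \times \Pi(t)x)$. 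Applying Young's inequality together with the bound $\|\Pi(t)\|_F = \sqrt{2}\,k^R\|g\|$ already noted in the text yields
\[
\dot{\mathcal{L}}_1 \;\leq\; -\frac{k^R}{2}\|g\times\breve g\|^2 + k^R\|g\|^2\|x\|^2.
\]

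Next, I would exploit the sphere-specific identity $\|g\times\breve g\|^2 = \mathcal{L}_1(2\|g\|^2-\mathcal{L}_1)$, which follows from $\|g\times\breve g\|^2 = \|g\|^4 - (g^\top\breve g)^2$ and $g^\top\breve g = \|g\|^2-\mathcal{L}_1$ on $\mathbb{S}^2_g$. On any sublevel set $\{\mathcal{L}_1 \leq c\}$ with $c < 2\|g\|^2$ (equivalently, any set bounded away from the antipodal point $\breve g=-g$), this gives the linear lower bound $\|g\times\breve g\|^2 \geq (2\|g\|^2-c)\mathcal{L}_1$, producing a textbook ISS dissipation inequality
\[
\dot{\mathcal{L}}_1 \;\leq\; -\frac{k^R}{2}(2\|g\|^2-c)\,\mathcal{L}_1 + k^R\|g\|^2\|x\|^2.
\]
This yields local ISS of \eqref{csc_sys1} with respect to $\breve g = g$, together with an exponential $\mathcal{KL}$-transient and a linear asymptotic gain in $\|x\|$, valid for any trajectory that remains inside the chosen sublevel set.

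The principal obstacle, and the reason the conclusion is only \emph{almost} global, is the undesired equilibrium $\breve g = -g$ of the unforced dynamics: although Lemma \ref{lem1} certifies that it is a repeller with measure-zero domain of attraction in the unforced case, this statement must be promoted to the forced setting before the local estimate above can be leveraged. To close the argument I would invoke the almost-global ISS framework of \cite{Angeli_TAC2011}: the required ingredients are precisely (i) almost global asymptotic stability of the unforced system on the compact manifold $\mathbb{S}^2_g$, provided by Lemma \ref{lem1}, and (ii) the sublevel-set dissipation inequality derived above, which supplies the ISS estimate on every compact set that excludes the antipodal point. Concatenating these two properties through the density result of \cite{Angeli_TAC2011} delivers the claimed almost global ISS of \eqref{csc_sys1} with respect to $\breve g=g$ and input $x$.
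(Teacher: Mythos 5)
Your Lyapunov computation is sound and, up to the choice of bounding (Young's inequality and the sphere identity $\|g\times\breve g\|^2=\mathcal{L}_1(2\|g\|^2-\mathcal{L}_1)$ versus the paper's cruder constants $c_1,c_2$), it reproduces the same dissipation estimate the paper uses to establish the ultimate-boundedness ingredient of the Angeli--Praly framework. The strategy — use $\mathcal{L}_1$, bound the cross term through $\|\Pi(t)\|_F=\sqrt{2}k^R\|g\|$, then invoke \cite{Angeli_TAC2011} — is essentially the paper's.

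However, there is a genuine gap at the final step. The almost-global ISS result of \cite{Angeli_TAC2011} does \emph{not} follow from "(i) AGAS of the unforced system plus (ii) a local/sublevel-set ISS inequality near the desired equilibrium," as you assert. Its hypotheses also require (condition A2) that every undesired equilibrium of the unforced system be \emph{exponentially} unstable, i.e.\ that the linearization at $\breve g=-g$ have eigenvalues with positive real part. This is precisely the robustness mechanism that promotes the measure-zero stable set of the antipode from the unforced to the forced setting: if $-g$ were only a non-hyperbolic repeller, an arbitrarily small input could create a nearby attractor with an open (positive-measure) basin, destroying almost-global ISS, and your sublevel-set estimate (valid only away from the antipode) cannot rule this out. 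Lemma \ref{lem1} gives instability of $\breve g=-g$ by a Chetaev-type argument, which is strictly weaker than the required exponential instability, so "Lemma \ref{lem1} certifies a repeller" does not discharge A2. The paper fills exactly this hole by perturbing $\tilde R=\mathcal{R}(\pi,\bar u)\exp([\zeta]_\times)$ about the antipode and showing the linearized dynamics $\dot z=k^RMz$, $M=\mathrm{diag}(\|g\|^2,\|g\|^2,0)$, have two positive eigenvalues. You need to add this linearization (or an equivalent verification of A2) before concatenating your two ingredients through \cite{Angeli_TAC2011}; a minor additional remark is that the conclusion there is obtained via its Propositions 2 and 3 (ultimate boundedness), not via a density argument, which is Rantzer-type machinery the paper does not use.
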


\begin{proof}
    From the fact that the system \eqref{csc_sys2} is globally exponentially stable, it follows that the state $x$ belongs to a compact set $\mathcal{A} \subset \mathbb{R}^{3(n+2)}$. This, together with the fact that $\breve g$ belongs to a compact manifold $\mathbb{S}_g^2$, ensures that the system \eqref{csc_sys1}, subject to the bounded inputs $x$, evolves on the compact manifold $\mathbb{S}_g^2\times\mathcal{A}$. Consequently, condition A0, given in \cite{Angeli_TAC2011}, is fulfilled. Moreover, considering the positive definite function  $\mathcal{L}_1$ and system \eqref{csc_sys1} with $x=0$, condition A1, given in \cite{Angeli_TAC2011}, is also fulfilled. To check condition A2, given in \cite{Angeli_TAC2011}, let us derive the first-order approximation of system \eqref{csc_sys1} with $x=0$. To do so, we apply small perturbation to $\breve g$ around the undesired equilibrium $-g$ through the attitude estimation error $\tilde R$ by letting $\tilde{R} = \mathcal{R}_\alpha(\pi, \bar{u}) \exp([\zeta]_\times)$, where $\bar{u} \perp g$ and $\zeta \in \mathbb{R}^3$ sufficiently small. Using the approximation $\exp([\zeta]_\times) \approx I_3 + [\zeta]_\times$ for sufficiently small $\zeta$, we have $\breve{g} = -g + [\zeta]_\times g$. Letting $z:= [\zeta]_\times g$, one can see $z$ as the small perturbation applied to $\breve g$ around the undesired equilibrium $-g$. By neglecting the cross terms and using the fact that $[u\times v]_\times =vu^\top-uv^\top$ for every $u, v \in \mathbb{R}^3$, one has $\dot z = k^R M z $, where $M:=\text{tr}(g g^\top) I_3-gg^\top=\text{diag}\left(||g||^2, ||g||^2, 0\right)$. This shows that the linearized dynamics of \eqref{csc_sys1} with $x=0$ have two identical positive eigenvalues $||g||$. Thus, condition A2 is also fulfilled. Now, consider the positive definite function  $\mathcal{L}_1$ whose time derivative along the trajectories \eqref{csc_sys1} is given by
    
    \vspace{-0.2cm}
    {\small\begin{align}
        \dot{\mathcal{L}}_1&=-k^R||g \times \breve g||^2-g^\top[\breve g]_\times \Pi(t) x\nonumber\\
        &=-k^R||[g]_\times \left(g-\bar g\right)||^2-g^\top[\breve g]_\times \Pi(t) x\nonumber\\
        &=-k^R ||g||^2||g-\bar g||^2-k^R\Big(||g||^2-2||g||^2g^\top\bar g\nonumber\\
        &~~~~~~~~~~~~~~~+\left(\bar g^\top g\right)^2\Big)-g^\top [\breve g]_\times \Pi(t) x. \label{equ_iss}
    \end{align}}The last equality was obtained using the fact that $[u]_\times^2=-||u||^2I_3+uu^\top$ for every $u \in \mathbb{R}^3$. Next, let $c_1$ and $c_2$ be a constant scalars such that $-k^R\big(||g||^2-2||g||^2g^\top\bar g+\left(\bar g^\top g\right)^2\big)\leq c_1$ and $||g^\top[\breve g]_\times \Pi(t)|| \leq c_2$. One has
    {\small \begin{align}
        \dot{\mathcal{L}}_1\leq-2k^R||g||^2\mathcal{L}_1+c_1+c_2||x||. \label{equ_iss}
    \end{align}}It follows from \eqref{equ_iss} that system \eqref{csc_sys1} satisfies the ultimate boundedness property introduced in \cite[Proposition 3]{Angeli_TAC2011}. Therefore, according to \cite[Proposition 2]{Angeli_TAC2011}, one can conclude that system \eqref{csc_sys1} is almost globally ISS with respect to $\breve g=g$ and the input $x$.
\end{proof}
Since the equilibrium $x=0$ for the system \eqref{csc_sys2} is GES and the system \eqref{csc_sys1} with $x=0$ is AGAS and almost globally ISS with respect to $x$, one can conclude that the cascaded system \eqref{csc_sys1}-\eqref{csc_sys2} is AGAS. This completes the
proof of Theorem.
\bibliographystyle{IEEEtran}
\bibliography{References}

\end{document}